\documentclass[sigconf,nonacm]{acmart}

\usepackage{booktabs}
\usepackage{xspace}
\usepackage[export]{adjustbox}
\usepackage{amsmath}
\providecommand{\mathbb}[1]{\mathbf{#1}}
\usepackage{listings}
\usepackage{xcolor}
\usepackage{tikz}
\usetikzlibrary{positioning,arrows.meta,fit,backgrounds,calc}

\renewcommand\footnotetextcopyrightpermission[1]{}
\setcopyright{none}
\acmConference[Preprint]{arXiv preprint}{2026}{}
\acmBooktitle{}
\acmDOI{}
\acmISBN{}

\lstdefinestyle{sb}{
  basicstyle=\ttfamily\scriptsize,
  keywordstyle=\bfseries,
  commentstyle=\itshape\color{gray!80!black},
  numbers=none,
  breaklines=true,
  columns=fullflexible,
  showstringspaces=false,
  captionpos=b,
  frame=none,
  xleftmargin=0pt,
  aboveskip=4pt,
  belowskip=4pt,
}
\newcommand{\sysname}{SchedBlame\xspace}
\let\origunderscore\_
\newcommand{\code}[1]{\texttt{\small\def\_{\origunderscore\allowbreak}#1}}
\newtheorem{invariant}{Invariant}
\newtheorem{proposition}{Proposition}

\begin{document}

\title{SchedBlame: Who Ran While You Waited?}
\subtitle{Culprit-Attributed CPU Contention for Containers on Stock Kernels}

\author{Hao Li}
\email{haolee@didiglobal.com}
\affiliation{%
  \institution{DiDi Chuxing}
  \city{Beijing}
  \country{China}
}

\author{Tonghao Zhang}
\email{zhangtonghao@didiglobal.com}
\affiliation{%
  \institution{DiDi Chuxing}
  \city{Beijing}
  \country{China}
}

\author{Honglei Wang}
\email{wanghonglei@didiglobal.com}
\affiliation{%
  \institution{DiDi Chuxing}
  \city{Beijing}
  \country{China}
}

%% acmart builds the running head automatically ("First et al."). Uncomment and
%% edit only if the generated one is wrong.
%\renewcommand{\shortauthors}{FIRST AUTHOR et al.}

\begin{abstract}
Containers that share a machine compete for CPU. When one slows down, the
operator needs to know which co-tenant is responsible, and no deployed signal can
say. Pressure stall information, per-cgroup wait counters, and run-queue latency
histograms are all victim-side: they report that a container waited, never who it
waited for. Recovering the culprit today means patching the kernel, which does not
travel across a fleet; or enabling full scheduler tracing, which cannot stay on;
or inferring blame statistically, which misattributes exactly when several victims
are hurting at once.

\sysname is an eBPF tracer that attributes CPU contention to the cgroups that
caused it, on stock kernels, continuously. It inverts the usual accounting.
Instead of measuring how long a victim waited, it measures the CPU time every
other cgroup consumed while that victim was runnable but not running on the same
CPU. The mechanism is a per-CPU bitmap of which measured cgroups are waiting,
maintained from the kernel's own runnable counts at four scheduler hooks. Every
completed run slice carries that bitmap, so a single 16-byte record charges CPU
time to a full row of a competitor $\times$ victim blame matrix. The kernel stores
no per-pair state.

Three properties follow. Slices are self-describing, so userspace holds no waiting
state and a lost record costs measurements rather than correctness. The measured
set is reconfigured by publishing an epoch, which invalidates every classification
cache and every per-CPU bitmap in constant time while the hooks keep running. And
sampling discards slices without touching waiting state, so the sampling rate
trades variance against cost and nothing else; userspace rescales retained slices
by the inverse keep probability, which yields an unbiased estimator.

\sysname decomposes each container's per-second CPU demand into runtime, internal
contention, external contention, and bandwidth throttling, flags anomalies against
a rolling $99$\textsuperscript{th}-percentile baseline, and names the competitors
responsible. It runs on unmodified 4.18 and 5.10 kernels. In production on a
96-core host tracking 84 containers, it costs about 1\% of Redis throughput and
6\% of one core.
\end{abstract}

\keywords{CPU contention, performance interference, noisy neighbors,
eBPF, CFS, cgroups, containers, attribution, observability}

\maketitle

\section{Introduction}
\label{sec:intro}

Machines in a container fleet are shared on purpose. Collocating jobs is how
operators recover the utilization that strict isolation wastes~\cite{borg,heracles},
and the CPU controller's shares and bandwidth limits make the practice tolerable.
The arrangement works until a container slows down. At that point the operator has
one question, and it is a question about a different container: \emph{who took my
CPU?}

The available signals do not answer it. Pressure stall information reports the
fraction of time a cgroup's tasks were stalled on CPU~\cite{psi}. Scheduler
statistics report accumulated run-queue wait per entity~\cite{schedstats}. The CPU
controller reports how long a group was throttled against its
quota~\cite{cgroupv2,schedbwc}. Tools such as \code{runqlat} and
\code{runqslower} render run-queue latency as a histogram~\cite{bcc,runqlat}.
Every one of these measures the victim. Each says \emph{that} a container waited,
how long, and how often; none says who was running during the wait. The gap is
operational, because remediation acts on the culprit: move it, cap it, reschedule
it, or page its owner. A victim-side metric raises the alarm but does not identify
the action.

Three workarounds exist, and each gives up something a fleet cannot give up. A
kernel patch can decompose wait time precisely, but patches do not travel across
the kernel versions a fleet runs, and they make the signal a property of the
kernel build rather than of the monitoring tool. Full scheduler tracing recovers
everything at a cost that rules out leaving it on. Statistical attribution
correlates a victim's degradation against co-tenant
behaviour~\cite{cpi2,panda,causal}; it deploys easily, but it infers rather than
observes. PANDA's authors document that correlation-based antagonist
identification becomes inconsistent exactly when multiple victims coexist and
share contention~\cite{panda}, which is the common case on a busy machine.

The closest existing work does read scheduler events. Volpert et al.\ detect
noisy neighbors online from two eBPF scheduler metrics and separate that
disruption from a cgroup's self-inflicted disruption, but their instrumentation
reads only tasks of the observed cgroup, so it establishes that contention came
from outside without identifying its source~\cite{volpert}. Netflix instruments
\code{sched\_wakeup} and \code{sched\_switch}, exports per-container run-queue
latency, and tags each preemption with the cgroup that displaced the
task~\cite{netflix}. That does name a cgroup, but only the one that displaced the
victim at a single instant, and it counts preemption events rather than time. A container can be starved for a hundred milliseconds by cgroups that
never preempt it once, simply by never being picked while they run. A preemption
count also has no denominator, so it cannot be normalized against the victim's own
demand.

\paragraph{The inversion.}
\sysname starts from a different observation. Contention is not a property of the
waiting task; it is a property of the CPU. If a container is runnable on CPU~$k$
and is not running, somebody else is, and that somebody is responsible for the
delay. So instead of measuring the victim's wait, \sysname measures \emph{the
runtime of everyone else while the victim waits}. Blame becomes an observation
rather than an inference: not ``this container's latency correlates with that
container's activity'', but ``this container consumed 40\,ms of CPU~7 while that
container sat runnable on CPU~7''.

Turning that observation into a continuously running tracer is the engineering
problem. Charging every running interval to every concurrently waiting victim is a
cross product, and enumerating a cross product is exactly what a scheduler hot path
must not do. \sysname avoids it with an asymmetry. The set of cgroups that can be
\emph{blamed} is large and sparse: any cgroup with a kernel CSS ID, 12 bits on the
wire. The set of cgroups whose contention is \emph{measured} is a monitoring
choice, so it can be small, bounded, and densely numbered, which makes it fit in a
per-CPU bitmap. Each completed run slice is stamped with that bitmap as it ends,
and the resulting 16-byte record charges one competitor's CPU time to every victim
waiting behind it: an entire row of the blame matrix in one record, with no
per-pair state in the kernel.

Three properties follow, and they are what make the design deployable rather than
merely correct.

\emph{Slices are self-describing.} Each record carries the snapshot it should be
attributed against, so userspace maintains no waiting state and never replays a
transition stream. A dropped perf record costs the CPU time in those slices and
nothing else, because the next slice arrives with a fresh snapshot.

\emph{Reconfiguration is a constant-time publication.} The measured set changes as
containers come and go. \sysname pairs each cached classification with an epoch
inside one aligned 64-bit word, so publishing a new epoch invalidates every cache
entry and every per-CPU waiting bitmap at once, with no scan, no drain, and no
pause of the hooks running concurrently.

\emph{Sampling cannot buy cost with correctness.} Slices are discarded before the
snapshot is copied, but state maintenance is unconditional. Sampling governs only
which completed slices userspace observes; it never perturbs what the kernel
believes about who is waiting. Userspace rescales retained durations by the
inverse keep probability, so the estimator stays unbiased and only its variance
degrades as the operator buys back CPU.

On this substrate, \sysname decomposes each measured container's per-second CPU
demand into four terms: time it ran, time it waited behind itself, time it waited
behind others, and time its own quota forbade it to run. The third term over the
total is the external contention ratio, the quantity an operator wants. \sysname
compares it against a rolling $99$\textsuperscript{th}-percentile baseline and, on
an anomaly, reports the ranked competitors the blame matrix holds responsible.

\paragraph{Contributions.}
\begin{itemize}
  \item \textbf{Blame-the-runner attribution} (\S\ref{sec:model}): CPU contention
  formulated as competitor runtime concurrent with victim waiting, which makes
  culprit attribution a direct observation at the scheduling event instead of a
  statistical inference over aggregate metrics.
  \item \textbf{A constant-cost in-kernel realization} (\S\ref{sec:bpf}): waiting
  state read from the kernel's own runnable counts under existing runqueue locks;
  a sparse/dense identity split that fits the victim set into a per-CPU bitmap;
  epoch-tagged lock-free reconfiguration; and self-describing packed slices that
  make userspace stateless and loss-tolerant.
  \item \textbf{Sampling decoupled from state} (\S\ref{sec:analysis}): an unbiased
  inverse-probability estimator with a variance bound and a cost model, so
  overhead becomes a tuning knob with no correctness cliff.
  \item \textbf{An explicit error model} (\S\ref{sec:approx}): every approximation
  the design accepts, with the direction of its bias and the conditions under
  which it matters.
  \item \textbf{A deployed implementation} (\S\ref{sec:impl}, \S\ref{sec:eval}) on
  unmodified 4.18 and 5.10 kernels, with preliminary overhead measurements and a
  stated plan for full evaluation.
\end{itemize}

\section{Background and Problem}
\label{sec:background}

\subsection{Group scheduling in CFS}

The Linux Completely Fair Scheduler organizes runnable tasks into a hierarchy of
task groups. Each cgroup in the CPU controller corresponds to a
\code{task\_group}, which has one \code{cfs\_rq} per CPU. Two properties of this
structure are load-bearing for \sysname.

First, each per-CPU \code{cfs\_rq} maintains \code{h\_nr\_running}, the number of
runnable CFS entities in that group's hierarchy on that CPU. The entity the
scheduler has selected remains counted: picking a task does not remove it from
\code{h\_nr\_running}. A group with $n$ runnable entities on a CPU therefore has
$n-1$ waiting there if it owns the CPU, and $n$ if another group does. \sysname
never counts runnable tasks itself; it reads this number.

Second, all mutations of a CPU's runnable state happen under that CPU's runqueue
lock: enqueue, dequeue, migration source accounting, throttling, and the context
switch. A tracepoint that fires inside those critical sections observes a
consistent snapshot and cannot race a concurrent update on the same CPU.
\sysname's correctness arguments (\S\ref{sec:bpf:order}) are all arguments about
where a hook sits relative to this lock and to the kernel's own counter updates.

The CPU controller also enforces bandwidth limits. When a group exhausts its quota
within a period, \code{throttle\_cfs\_rq()} detaches its \code{cfs\_rq} from the
schedulable hierarchy, and \code{unthrottle\_cfs\_rq()} reattaches it when the
quota refills~\cite{schedbwc}. Throttling is not contention: nobody took the CPU
from the group, its own limit did. A tool that conflates the two blames innocent
neighbors for a configuration decision, so \sysname accounts throttling
separately.

\subsection{What deployed signals report}

Table~\ref{tab:signals} lists the CPU contention signals available on an
unmodified kernel. They differ in granularity and cost and agree in one respect:
all describe the victim.

\begin{table}[t]
\centering
\small
\caption{CPU contention signals on stock kernels. All are victim-side.}
\label{tab:signals}
\begin{adjustbox}{max width=\columnwidth}
\begin{tabular}{@{}lll@{}}
\toprule
Signal & Reports & Names culprit? \\
\midrule
\code{cpu.pressure} (PSI) & stall fraction per cgroup & no \\
\code{schedstat} \code{wait\_sum} & accumulated wait per entity & no \\
\code{cpu.stat} throttling & quota-induced stall & n/a \\
\code{cpuacct.usage} & consumed CPU time & no \\
\code{runqlat}/\code{runqslower} & wait-latency distribution & no \\
\midrule
\sysname & wait time \emph{per competitor} & yes \\
\bottomrule
\end{tabular}
\end{adjustbox}
\end{table}

This is not an oversight but a consequence of where the measurement happens. A
wait counter is incremented against the entity that is waiting. The identity of
whoever occupies the CPU is available at that moment, since it is \code{rq->curr},
but no accounting path records it: recording it means recording a relationship
rather than a scalar, and a relationship has no natural home in a per-cgroup
counter file.

\subsection{Why the obvious alternatives do not deploy}

\paragraph{Patch the kernel.}
The scheduler can decompose wait time internally, separating waiting caused by
siblings within a group from waiting caused by outside competition. Such counters
are exact and nearly free, because the kernel is already at the right place with
the right locks held. They are also unavailable in practice. A fleet runs several
kernel versions at once, patches must be forward-ported indefinitely, and the
signal becomes a property of the kernel build rather than of the tool. Even where
such a counter exists it aggregates: it reports how much external competition a
group suffered, not which group supplied it.

\paragraph{Trace every switch.}
Emitting a record per context switch and reconstructing the schedule offline
recovers everything, culprits included. On a busy machine, context switches arrive
fast enough that per-event userspace delivery consumes a significant fraction of a
core, and the resulting volume cannot be retained continuously. Tools in this
family are diagnostic instruments, invoked once a problem is known. Contention,
however, is often bursty and unreproducible: by the time an operator attaches a
tracer, the culprit has finished.

\paragraph{Infer blame statistically.}
CPI\textsuperscript{2} detects CPI outliers and nominates likely
perpetrators~\cite{cpi2}. PANDA ranks antagonists by correlation against a
machine-level metric, using global historical data to suppress sampling
noise~\cite{panda}. Causal-inference frameworks apply Granger causality to
resource time series~\cite{causal}. These methods deploy easily and work at
cluster scale, but they are indirect, and their failure mode is intrinsic: when
several victims coexist and share contention, correlation-based attribution
becomes inconsistent~\cite{panda}. Two containers that ramp up together are
indistinguishable to a correlational method, and a busy machine supplies many such
pairs.

\paragraph{Instrument the scheduler.}
Two efforts read scheduler events directly, and they mark the state of the art
this work extends. Volpert et al.\ hook \code{sched\_wakeup} and
\code{sched\_switch} and derive two per-second, workload-agnostic metrics for an
observed cgroup: average process scheduling latency (\emph{PSL}), and the average
number of switches whose incoming task is the idle task (\emph{PSP}), which they
read as an indicator of quota throttling. A four-quadrant decision matrix
interprets the pair: low PSL and low PSP means undisturbed, high PSL with low PSP
indicates a noisy neighbor, and high PSL with high PSP indicates disruption
originating inside the group~\cite{volpert}. The scheme is online, needs no
workload profile, and separates external contention from self-inflicted
disruption, which is a real advance over the counters above.

Its instrumentation, however, only ever examines the observed cgroup. At each
switch it either counts an idle-task preemption or, when the incoming task belongs
to the observed group, closes out that task's wait; the competing cgroup is never
read. The output therefore says that contention came from outside, not which
cgroup supplied it. Netflix's tracer does record the other party, tagging each
preemption with the cgroup of the outgoing task and classifying it as
same-container, other-container, or system service~\cite{netflix}.

Preemption tagging is the closest deployed approach to naming a culprit, and its
limitation is structural: a tag captures a single instant and counts events rather
than time. A container starved for a hundred milliseconds because it was never
picked, while three other groups ran in turn, produces no preemption attributable
to any of them. Neither metric has a denominator in the victim's own demand, so a
busy container and a starved one are hard to tell apart.

\subsection{Requirements}

These gaps define \sysname's requirements.

\begin{enumerate}
  \item \textbf{Culprit-attributed.} Name the cgroups responsible and quantify
  each one's contribution in time.
  \item \textbf{Stock kernel.} No kernel patch. Only tracepoints, kprobes, and BPF
  facilities present in production kernels, including old ones.
  \item \textbf{Continuously on.} Overhead low enough to leave running, so that
  bursty contention is captured when it happens.
  \item \textbf{Bounded state.} Memory and per-event work independent of task
  count and of the workload's switching behaviour.
  \item \textbf{Degrade, do not corrupt.} Under buffer loss or reconfiguration,
  lose measurements, never accumulate persistent error.
  \item \textbf{Normalized output.} Express contention relative to the victim's own
  demand, so thresholds are comparable across containers.
\end{enumerate}

\section{Blame the Runner}
\label{sec:model}

\subsection{The formulation}

Consider one CPU $k$. At every instant exactly one cgroup owns it, and some set of
other cgroups have runnable tasks queued behind. Let $\rho_k(\tau)$ be the cgroup
running on CPU $k$ at time $\tau$, and let $w_{k,t}(\tau) \in \{0,1\}$ indicate
that measured cgroup $t$ has at least one runnable task on CPU $k$ that is not
running at time $\tau$.

\sysname's blame quantity is the time integral of their coincidence. For a
competitor cgroup $c$ and a measured cgroup (a \emph{target}) $t$,
\begin{equation}
\label{eq:blame}
B(c,t) \;=\; \sum_{k}\int \mathbf{1}[\rho_k(\tau) = c]\; w_{k,t}(\tau)\; d\tau ,
\qquad c \neq t .
\end{equation}

$B(c,t)$ has units of CPU-nanoseconds and reads directly: how much CPU time $c$
consumed while $t$ was waiting for a CPU it could have used. Summing over
competitors gives $t$'s external contention $E_t = \sum_{c \neq t} B(c,t)$, and the
individual terms rank the culprits.

The same integral with $c = t$ measures something different: the target running on
one CPU while another of its own tasks waits there. That is self-inflicted
queueing, not interference. \sysname keeps it as a separate term $I_t$,
\emph{internal contention}, both because operators must not be paged for their own
parallelism and because $I_t$ is a useful control. A target with high $I_t$ and low
$E_t$ is oversubscribed by itself.

\subsection{Demand decomposition}

Blame is actionable only relative to how much CPU the victim wanted. \sysname
decomposes each target's demand over an evaluation interval into four terms:
\begin{align*}
R_t &= \text{CPU time the target ran} \\
I_t &= \text{internal contention (waited behind itself)} \\
E_t &= \text{external contention (waited behind others)} \\
T_t &= \text{time directly throttled against its own quota}
\end{align*}
with total wait and total demand
\begin{equation}
\label{eq:demand}
W_t = I_t + E_t + T_t, \qquad D_t = R_t + W_t ,
\end{equation}
and the reported quantity
\begin{equation}
\label{eq:ratio}
r_t \;=\; \frac{E_t}{D_t} \;\in\; [0,1] .
\end{equation}

Every unit of target demand falls into exactly one term: the target ran, waited
behind itself, waited behind someone else, or was forbidden to run by its own
bandwidth limit. So $r_t$ is the fraction of \emph{wanted} CPU that external
competition took. Unlike a raw wait time or a preemption count, it is comparable
across containers of different sizes and activity levels.

\paragraph{One demand unit per CPU.}
The $w_{k,t}$ in Eq.~\ref{eq:blame} is an indicator, not a count. If a target has
five runnable tasks queued on a CPU, \sysname charges the interval once, not five
times. This is a deliberate departure from task-weighted kernel wait accounting,
which multiplies elapsed time by the number of waiting tasks. The indicator
formulation is what lets a target's state on a CPU occupy a single bit, which in
turn is what makes the per-CPU snapshot small enough to travel inside every slice.
The cost is that $r_t$ measures how much of the target's opportunity to run was
taken, rather than how much aggregate task time was lost. For a target whose
runnable-task count is stable the two agree up to a constant that cancels in the
ratio; when parallelism changes sharply within an interval they diverge.
\S\ref{sec:approx} states this and the other approximations precisely.

\subsection{Why this is measurable cheaply}

Eq.~\ref{eq:blame} is a cross product over (competitor, target) pairs, and
computing a cross product on the scheduler's hot path is what a continuously
running tracer cannot afford. Two observations make it tractable.

\paragraph{The two sides are asymmetric.}
Anything can be a culprit, so the competitor space must be large: \sysname
identifies competitors by the kernel's sparse CSS ID, 12 bits on the wire,
covering 4096 concurrent task groups. But which cgroups are \emph{measured} is a
monitoring choice, and that set can be small and bounded. \sysname gives each
target a \emph{dense} identifier that is directly a bit position, so ``which
targets are waiting on CPU $k$'' is a machine word or two.

\paragraph{Blame is a property of the ending boundary.}
The inner integral of Eq.~\ref{eq:blame} is piecewise constant, because $\rho_k$
changes only at a context switch. A context switch therefore delimits a
\emph{slice}: an interval during which exactly one cgroup ran on one CPU. If the
tracer stamps the slice with that CPU's waiting bitmap at the moment it ends, the
single record
\[
\langle\, \text{competitor CSS ID},\; \text{duration},\; \text{waiting bitmap} \,\rangle
\]
carries everything needed to charge that competitor against every waiting target
at once. One record fills an entire row of the blame matrix. The kernel enumerates
no pairs, allocates no per-pair state, and does no work proportional to the number
of targets; it copies a fixed-size bitmap it was already maintaining.

Figure~\ref{fig:inversion} shows the resulting picture on one CPU.

\begin{figure}[t]
\centering
\begin{adjustbox}{max width=\columnwidth}
\begin{tikzpicture}[x=1cm,y=1cm,font=\scriptsize]
  \draw[->] (0.4,0.2) -- (7.4,0.2) node[right] {time};
  % running slices on CPU k
  \fill[blue!18]   (0.6,0.45) rectangle (2.3,1.1);
  \fill[orange!25] (2.3,0.45) rectangle (4.2,1.1);
  \fill[green!20]  (4.2,0.45) rectangle (5.4,1.1);
  \fill[blue!18]   (5.4,0.45) rectangle (6.9,1.1);
  \draw (0.6,0.45) rectangle (2.3,1.1);
  \draw (2.3,0.45) rectangle (4.2,1.1);
  \draw (4.2,0.45) rectangle (5.4,1.1);
  \draw (5.4,0.45) rectangle (6.9,1.1);
  \node at (1.45,0.78) {$c_1$};
  \node at (3.25,0.78) {$c_2$};
  \node at (4.80,0.78) {$c_3$};
  \node at (6.15,0.78) {$c_1$};
  \node[left] at (0.55,0.78) {CPU $k$};
  % waiting interval of target t
  \draw[very thick,red!70!black] (1.4,1.6) -- (5.9,1.6);
  \draw[red!70!black] (1.4,1.45) -- (1.4,1.75);
  \draw[red!70!black] (5.9,1.45) -- (5.9,1.75);
  \node[above] at (3.65,1.65) {target $t$ runnable, not running};
  % charge arrows
  \foreach \x in {1.85,3.25,4.80,5.65} {
    \draw[-{Latex[length=1.4mm]},gray!70] (\x,1.52) -- (\x,1.16);
  }
\end{tikzpicture}
\end{adjustbox}
\caption{Blame the runner. While target $t$ is runnable but not running on
CPU~$k$, whoever occupies the CPU is charged for that time. Every competitor that
ran during the wait is charged, not only the one that displaced $t$.}
\label{fig:inversion}
\Description{A timeline of one CPU showing four consecutive run slices belonging to competitors c1, c2, c3 and c1 again. A bar above marks the interval in which target t is runnable but not running, and arrows charge each overlapping competitor slice to t.}
\end{figure}

\section{In-Kernel Design}
\label{sec:bpf}

\begin{figure}[t]
\centering
\begin{adjustbox}{max width=\columnwidth}
\begin{tikzpicture}[font=\scriptsize,node distance=3mm,
  box/.style={draw,rounded corners=1pt,align=center,inner sep=3pt,minimum height=6mm},
  k/.style={box,fill=blue!8},
  u/.style={box,fill=orange!10}]
  \node[k,minimum width=2.1cm] (hooks) {scheduler hooks\\ \code{switch}, \code{wakeup},\\ \code{migrate}, \code{unthrottle}};
  \node[k,right=6mm of hooks,minimum width=1.9cm] (bitmap) {per-CPU\\waiting bitmap};
  \node[k,below=5mm of bitmap,minimum width=1.9cm] (batch) {per-CPU\\slice batch};
  \node[k,left=6mm of batch,minimum width=2.1cm] (cache) {epoch-tagged\\ \code{css\_id}$\rightarrow$dense\\cache};
  \node[u,right=6mm of bitmap,minimum width=1.7cm] (perf) {perf ring\\(per CPU)};
  \node[u,below=5mm of perf,minimum width=1.7cm] (runner) {runner\\(single owner)};
  \node[u,below=5mm of runner,minimum width=1.7cm] (matrix) {charge matrix\\ $c \times t$};
  \draw[-{Latex[length=1.2mm]}] (hooks) -- (bitmap);
  \draw[-{Latex[length=1.2mm]}] (hooks) -- (cache);
  \draw[-{Latex[length=1.2mm]}] (bitmap) -- node[right,pos=0.4]{stamp} (batch);
  \draw[-{Latex[length=1.2mm]}] (batch.east) -- ++(0.35,0) |- (perf.west);
  \draw[-{Latex[length=1.2mm]}] (perf) -- (runner);
  \draw[-{Latex[length=1.2mm]}] (runner) -- (matrix);
  \draw[-{Latex[length=1.2mm]},dashed] (runner.west) -- ++(-0.5,0) |- node[pos=0.25,left,align=right]{epoch\\publish} (cache.east);
  \begin{scope}[on background layer]
    \node[draw,dashed,gray,rounded corners,fit=(hooks)(bitmap)(batch)(cache),
      label={[gray,font=\scriptsize]below:kernel (BPF)}] {};
    \node[draw,dashed,gray,rounded corners,fit=(perf)(runner)(matrix),
      label={[gray,font=\scriptsize]below:userspace}] {};
  \end{scope}
\end{tikzpicture}
\end{adjustbox}
\caption{\sysname architecture. The only userspace-to-kernel write is the target
epoch; everything else flows outward.}
\label{fig:arch}
\Description{Block diagram. In the kernel, scheduler hooks update a per-CPU waiting bitmap and an epoch-tagged classification cache; the bitmap stamps slices appended to a per-CPU batch. Batches flow through a per-CPU perf ring to a single userspace runner, which builds the competitor-by-target charge matrix and publishes target epochs back to the cache.}
\end{figure}

\subsection{Per-CPU waiting state}

The tracer maintains one piece of state: for each CPU, the set of targets waiting
there.
\[
\code{waiting\_bitmap[cpu]} \;\subseteq\; \{0,\dots,M-1\}
\]
$M$ is the compiled target capacity. The representation is a 20-bit base field
plus a compile-time number of 64-bit extension words,
\[
M = 20 + 64 \cdot \textsc{extra},
\]
because the base field shares a 64-bit word with the competitor identifier and the
slice duration (\S\ref{sec:bpf:slice}). The default $\textsc{extra}=1$ gives
$M=84$ targets in a 16-byte slice. The dense identifier is a byte, so
$\textsc{extra} \le 3$ and $M \le 212$.

Two properties of this state drive the rest of the design. It is
$O(\text{CPUs})$, independent of how many tasks or containers are on the machine:
there is no per-task map, no per-cgroup runnable counter, and no hash lookup keyed
by anything that grows with the workload. And it is authoritative but not durable.
It is derived from the kernel's own counters at every relevant hook, so it can be
discarded and rebuilt at any time, which is what makes the reconfiguration scheme
of \S\ref{sec:bpf:epoch} safe.

\subsection{Reading waiting state instead of tracking it}
\label{sec:bpf:order}

A tracer could maintain waiting state by counting: increment on enqueue, decrement
on dequeue, infer waiting from the difference. That requires seeing every
transition, which makes it fragile against a missed or reordered event and ties
correctness to the event stream. \sysname does not count. At each hook it reads
\code{h\_nr\_running}, the kernel's own runnable count for that group on that CPU,
and recomputes the bit. A lost update is repaired by the next observation instead
of corrupting an accumulator.

What makes this sound is where the hooks sit. Each argument below is an invariant
about kernel ordering, and each holds identically on 4.18 and 5.10.

\begin{invariant}[Switch]
\label{inv:switch}
\code{trace\_sched\_switch} fires after the scheduler has updated runnable counts
for the outgoing task and selected the incoming one, but before the physical
context switch, with the runqueue lock held. A selected task remains counted in
\code{h\_nr\_running}.
\end{invariant}

Two consequences follow. First, on hook entry the waiting bitmap still describes
the interval in which \code{prev} ran, so the completed slice is stamped before any
update is applied. Second, after the switch, ``\code{prev}'s group still has a
runnable entity here'' means one is waiting, since \code{prev} has left the CPU;
and ``\code{next}'s group has at least two'' means a sibling remains queued behind
\code{next}. The hook therefore stamps the completed slice against the old bitmap,
sets \code{prev}'s bit from $\code{h\_nr\_running} \ge 1$, and sets \code{next}'s
bit from $\code{h\_nr\_running} \ge 2$, in that order. Applying \code{next} second
makes same-cgroup switches settle correctly.

\begin{invariant}[Wakeup]
\label{inv:wakeup}
\code{trace\_sched\_wakeup} and \code{trace\_sched\_wakeup\_new} fire after the
task has been enqueued on its destination \code{cfs\_rq} and the destination count
incremented, under the destination runqueue lock.
\end{invariant}

That lock also serializes \code{sched\_switch} on the destination CPU, so the hook
can safely combine the freshly incremented count with the tracer's record of which
group is running there: if that group owns the CPU, one entity runs and $n-1$
wait; otherwise all $n$ wait. \sysname does not use \code{sched\_waking}, which
fires before enqueue and cannot see the updated count. It also ignores
self-wakeups, where the task is already current and merely has its state restored
to \code{TASK\_RUNNING}: no entity is enqueued, so the bitmap must not move.

\begin{invariant}[Migration]
\label{inv:migrate}
For a queued runnable CFS task, \code{trace\_sched\_migrate\_task} fires inside
\code{set\_task\_cpu()} while the \emph{source} runqueue lock is held, after the
task has been dequeued from the source and before its CPU field changes. The task
still points at the source \code{cfs\_rq}.
\end{invariant}

The hook can therefore recompute the source bit exactly. Without it, a target's
waiting bit would stay set on a CPU its last waiting task had left, and every
later slice on that CPU would be blamed for a victim that is no longer there. The
destination bit is deliberately not written: the destination runqueue lock is not
held, and a remote read-modify-write of another CPU's bitmap would race. The
destination is reconstructed by its own next switch or wakeup. The same tracepoint
also fires during wakeup placement, where only \code{task->pi\_lock} may be held,
but there the task is \code{TASK\_WAKING} and is filtered out before any runqueue
state is read.

\begin{invariant}[Throttling]
\label{inv:throttle}
\code{throttle\_cfs\_rq()} and \code{unthrottle\_cfs\_rq()} execute with the
affected runqueue lock held. Throttling of the current group completes before the
resulting \code{sched\_switch}.
\end{invariant}

Throttled time must not be charged to neighbors. Because throttling completes
before the switch, the switch hook's post-snapshot update already sees
\code{cfs\_rq->throttled} and clears the target's bit. The completed pre-throttle
slice keeps its original state, and later slices on that CPU exclude the throttled
group, with no pending state and no separate bitmap pass. The enqueue-time throttle
path is covered symmetrically by the wakeup hooks.

Unthrottling is the one case that needs a return probe, for two reasons: an
unthrottle attempt can return with the runqueue still throttled, and the operation
may run on a CPU other than the affected one. \sysname stores the argument at
entry, keyed by the hook's CPU, which is safe because the held runqueue lock
disables preemption. On return it derives the affected CPU through
\code{cfs\_rq->rq}, rebuilds the waiting bit from \code{h\_nr\_running}, and emits
the elapsed throttle duration computed from the same two fields the kernel itself
uses.

\subsection{Sparse competitors, dense targets}
\label{sec:bpf:identity}

Userspace and the kernel name cgroups differently, and reconciling them without a
lookup on the hot path is a small but essential piece of the design.

Userspace knows a container's CPU cgroup by the address of its CSS, which is also
the \code{task\_group} address because \code{css} is that struct's first member;
call it the \emph{cgid}. Userspace does not know the kernel's numeric
\code{css.id}. BPF, given a task, can read both. So userspace publishes the
assignment it owns in the only currency it can name,
\[
\code{target\_cgid\_to\_dense}[\,\text{cgid}\,] = \text{dense id},
\]
and BPF converts it into the currency it needs, an array indexed by
\code{css\_id}. The conversion is lazy. The first hook to observe a CSS ID
performs the map lookup and caches the result; every later observation is an array
load with no helper call. Since \code{css\_id} is bounded at 4096, the cache is a
flat 32\,KiB array.

\subsection{Reconfiguration as an epoch publication}
\label{sec:bpf:epoch}

Containers start and stop, so the target set changes while the tracer runs. The
usual approaches either scan state proportional to the ID space or quiesce the
hooks, and the hooks here run in the scheduler's critical path. \sysname does
neither.

Each cache entry is one aligned 64-bit word holding an epoch in the high 32 bits
and a dense ID in the low 8:
\[
\code{cache}[\text{css\_id}] = (\text{epoch} \ll 32)\;|\;\text{dense id}.
\]
Because the pair is loaded and stored as a single word, no concurrent hook can
observe an epoch from one publication paired with a dense ID from another. A
lookup is valid only if its epoch matches the current one; otherwise the entry is
rebuilt from the map. Publishing a new assignment is then three steps:

\begin{enumerate}
  \item rebuild \code{target\_cgid\_to\_dense} completely;
  \item update userspace's own dense-slot metadata;
  \item increment the single userspace-written global,
        \code{target\_css\_id\_epoch}.
\end{enumerate}

The increment invalidates every cache entry at once. It also invalidates every
per-CPU waiting bitmap: each CPU records which epoch its bitmap belongs to, and on
its first hook in a new epoch it clears the bitmap and adopts the new epoch.
Nothing is scanned, nothing is drained, no hook is paused, and a vacated dense slot
can be reused immediately without waiting for in-flight slices to retire. During
step~1 a hook may briefly resolve against a partially rebuilt map, but that result
is cached under the \emph{old} epoch and step~3 discards it.

This is where the derived, non-durable nature of the waiting state pays off. An
epoch bump destroys the tracer's entire notion of who is waiting on every CPU, and
that is acceptable precisely because the state is re-read from the kernel rather
than accumulated: the next switch, wakeup, or unthrottle on each CPU rebuilds it.
Reconfiguration costs a brief under-report, bounded by the time to the next
scheduling event on each CPU, and buys a path with no locks, no scans, and no
coordination with the hot path.

Epoch $0$ means unpublished, and hooks are not attached while it is current, so the
tracer never sees traffic against an undefined assignment. The epoch does not
wrap: exhausting a 32-bit counter stops the tracer, as does any failure to write
the map or the epoch. \sysname never continues on a partially published
assignment, because a wrong assignment silently misattributes blame, which is the
one failure mode a blame-assigning tool must not have.

\subsection{Self-describing slices}
\label{sec:bpf:slice}

At each switch the completed slice is packed into one 64-bit word plus the bitmap
extension words:

\begin{center}
\small
\begin{tabular}{@{}llr@{}}
\toprule
bits & field & width \\
\midrule
$0$--$11$  & competitor CSS ID & 12 \\
$12$--$31$ & waiting targets $0$--$19$ & 20 \\
$32$--$63$ & slice duration (ns) & 32 \\
\midrule
\multicolumn{2}{@{}l}{$+\;\textsc{extra} \times 64$ bits: targets $20\dots M-1$} & \\
\bottomrule
\end{tabular}
\end{center}

The layout makes the common configuration 16 bytes: one word for identity,
duration, and the first 20 targets, one word for the remaining 64. The 32-bit
duration saturates at about 4.29\,s, counted rather than wrapped.

The record is complete on its own. It references no tracer state, no event
sequence, and no prior record. Userspace can attribute it in isolation, holds no
waiting state, and never replays transitions. A lost perf record or a failed batch
submission therefore costs exactly the CPU time in those slices: the next slice
arrives with a fresh snapshot and attribution is correct again immediately. This
is what lets \sysname drop data under pressure without accumulating error, and it
is not available to any design that reconstructs waiting from an event stream.

The snapshot is taken at the slice's \emph{ending} boundary and applied to the
whole slice, so a wakeup or migration part-way through another cgroup's slice is
represented by the state at its end (\S\ref{sec:approx}).

\subsection{Sampling that cannot corrupt}
\label{sec:bpf:sampling}

Slice records dominate volume, so \sysname samples them: a completed slice is
discarded with probability $1-p$, and $p$ is a runtime knob.

The discipline is that the sampling decision happens before the snapshot is copied
and gates no state update. The switch hook always performs its \code{prev} and
\code{next} bitmap updates, whether or not the completed slice was retained, and
wakeup, migration, and unthrottle updates are likewise unconditional. Sampling
changes only which completed slices userspace observes.

This separation is unusual in sampled tracers and it matters. Where the sampled
event stream is also the state stream, as in any design that infers waiting from
observed transitions, dropping events corrupts the model. The sampling rate is
then bounded by correctness and is typically biased: drop when hot, keep when
quiet. Here it is bounded only by statistical precision, so it can be uniform and
therefore unbiased, and userspace simply divides each retained duration by $p$
(\S\ref{sec:analysis}). Placing the decision before the copy also means a discarded
slice pays neither the bitmap copy nor the transport.

\subsection{Transport}

Per-slice perf submission would spend more time in framing than in measurement, so
each CPU accumulates slices into a batch, up to 128 by default, emitted when full
or when the batch exceeds an age bound checked at the next switch on that CPU. A
full batch is a single 2072-byte perf record carrying 128 slices, amortizing one
helper call and one userspace dequeue across all of them.

Old verifiers constrain the encoding. The 4.18 verifier requires the size argument
of \code{bpf\_perf\_event\_output} to be a compile-time constant, so partial
batches are emitted at the smallest fixed capacity in $1,2,4,\dots,128$ that
contains the valid count, and userspace reads only the valid prefix. The batch
header carries a magic value and the compiled bitmap width. Userspace rejects any
batch whose width disagrees with its own and verifies the map value size before
attaching, so a mismatched kernel object and userspace binary fail loudly rather
than misattributing.

Two record types bypass batching. \emph{Identity} records publish the sparse
\code{css\_id} to cgid relationship the first time a CSS ID is seen.
\emph{Throttle} records report a completed direct-throttle interval with its
already-resolved dense target ID. Both are unsampled: identity because a slice
whose competitor is unknown is unattributable and discarded, throttle because $T$
appears in the denominator of every ratio. Neither flushes a pending slice batch,
since identity records have no ordering dependency and throttle durations are
interval quantities rather than points in the slice stream.

\section{Userspace Attribution}
\label{sec:userspace}

\subsection{A single owner}

One runner owns all mutable attribution state: target assignments, identity
tables, the charge matrix, ratio histories, and epoch publication. Perf draining,
the one-second evaluation, and target synchronization run serially inside it.
Container lifecycle callbacks, which may fire concurrently, only raise a
coalescing signal and never touch state. Anomaly reports are handed to an upload
worker as immutable documents, so the worker never reads mutable state.

This is not just defensive engineering. The blame matrix is written on every slice
and read at interval boundaries, and any locking scheme fine-grained enough to be
cheap would admit reports assembled from inconsistent rows. A report that names
the wrong culprit is worse than no report. Serializing the pipeline in one owner
makes each one-second report a consistent snapshot by construction, and costs
nothing because the work is dominated by decoding.

\subsection{Attributing a slice}

Decoding is arithmetic on the packed word; attribution is a walk over the set bits
of its snapshot.

\begin{enumerate}
  \item Unpack the competitor CSS ID and duration. Discard the slice if that CSS
  ID has no known identity yet.
  \item Scale the duration: $\hat{d} = d / p$.
  \item If the competitor is itself a target, add $\hat{d}$ to its runtime $R$.
  If its own bit is set in the snapshot, meaning it ran while another of its own
  tasks waited on that CPU, also add $\hat{d}$ to its internal contention $I$.
  \item For every \emph{other} target bit set in the snapshot, add $\hat{d}$ to the
  charge-matrix cell (competitor, target) and to that target's external contention
  $E$.
\end{enumerate}

Step~4 iterates set bits with a trailing-zero-count loop over the base field and
each extension word, so its cost is proportional to the number of targets actually
waiting, not to the target capacity. The charge matrix is one flat
$4096 \times M$ array of 64-bit counters, allocated once and 2.6\,MiB at $M=84$,
so a cell update is a single contiguous indexed add with no allocation and no
hashing.

Userspace keeps no per-cgroup runnable counts, reconstructs no waiting
transitions, and holds no state a lost record could invalidate. A throttle record
is consumed purely as a completed duration added to that target's $T$.

\subsection{One-second evaluation}

Every second, for each active target, the runner forms $W = I+E+T$ and $D = R+W$
(Eq.~\ref{eq:demand}). If $D = 0$ the interval is skipped, since no demand means no
contention to report. Otherwise it computes the external contention ratio
$r = E/D$.

Anomaly detection is per-target and baseline-relative. A fixed threshold cannot
work: a batch container may sit at 40\% external contention by design, while an
interactive service at 5\% is already in trouble. \sysname keeps each target's last
600 valid ratios and, once at least 60 samples exist, reports when
\[
r \;>\; K \cdot P_{99}(\text{history}),
\]
where $K$ defaults to $1.0$. The current sample is compared against the baseline
before being appended to it, so a sustained shift raises the baseline and stops
paging: \sysname reports departures from a container's normal contention, not its
absolute level. A historical $P_{99}$ of zero counts as no baseline. The sample
joins the history but nothing is reported, because a container that has never
experienced contention offers no scale against which to judge the first
occurrence.

A report carries the full decomposition ($R$, $I$, $E$, $T$, the ratio, the
baseline, the threshold) and, from the charge matrix, the ranked competitors and
the CPU-nanoseconds each is responsible for. That is the payload an operator acts
on. All accumulators are then cleared for the next interval.

Reports are queued to a bounded channel and uploaded asynchronously. A full queue
drops the new report and increments a counter rather than blocking the runner:
under a contention storm, the measurement pipeline must not become a second source
of contention.

\subsection{Target set maintenance}

Targets are chosen from the containers on the machine, filtered by scope and QoS
class, and assigned to dense slots. Selection is re-evaluated when a cgroup
lifecycle notification arrives, and unconditionally every 60 seconds in case a
notification is lost.

Re-selection preserves stability. A target is identified by both its container ID
and its cgid, so a surviving eligible target keeps its dense slot and refreshes its
metadata. Only genuinely vacant slots are refilled at random from the remaining
candidates: slots whose container disappeared, was recreated with a new cgid, or
stopped matching the filters. Stability matters because a dense slot carries 600
samples of ratio history, and reshuffling assignments would discard the baselines
anomaly detection depends on. When a slot is reused, its accumulators and history
are cleared, so a new occupant never inherits its predecessor's baseline.

If container discovery fails at runtime, the current assignment is retained and
retried shortly. \sysname never publishes an empty replacement, because an empty
target set silently stops all measurement while the tracer appears healthy.

\section{Estimator and Cost Model}
\label{sec:analysis}

\subsection{The sampling estimator}

Fix an evaluation interval and a quantity to measure, say the external contention
$E_t$ of target $t$, or one cell $B(c,t)$ of the charge matrix. Let
$d_1,\dots,d_n$ be the durations of the completed slices that contribute to it:
those whose ending snapshot has $t$'s bit set and whose competitor is $c$. The
true value is $\sum_i d_i$. \sysname observes each slice independently with
probability $p$ and reports
\begin{equation}
\hat{V} \;=\; \sum_{i=1}^{n} \frac{d_i}{p}\, X_i , \qquad X_i \sim \text{Bernoulli}(p) ,
\end{equation}
the Horvitz--Thompson estimator for this design~\cite{ht1952}.

\begin{proposition}[Unbiasedness]
\label{prop:unbiased}
$\mathbb{E}[\hat{V}] = \sum_i d_i$.
\end{proposition}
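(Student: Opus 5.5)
The plan is to use linearity of expectation directly, after pinning down what is random. I will condition on the realized schedule: the set of contributing slices $\{1,\dots,n\}$ and their durations $d_1,\dots,d_n$ are treated as fixed, and only the keep indicators $X_i$ are random. The justification comes from \S\ref{sec:bpf:sampling}. The sampling decision gates no state update, so the waiting bitmaps, the ending snapshots, and hence which slices contribute to $E_t$ or $B(c,t)$ do not depend on the $X_i$. Sampling happens in the BPF hook and does not alter the scheduler's behaviour; under the model I will take that as given.

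With that conditioning in place, the computation is short. Write $\mathbb{E}[\hat V]=\sum_i (d_i/p)\,\mathbb{E}[X_i]$ by linearity, which needs no independence among the $X_i$. Then substitute $\mathbb{E}[X_i]=p$ for a Bernoulli$(p)$ variable to obtain $\sum_i d_i$. Because this conditional expectation equals the true value for every realized schedule, the tower property gives unbiasedness unconditionally as well.

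The one substantive step is the conditioning argument, namely that $n$ and the $d_i$ are not functions of the $X_i$. I would ground it on the fact that every bitmap write (switch, wakeup, migrate, unthrottle) is unconditional. I would also note that the identity and throttle records are unsampled, so the discard rule for unknown CSS IDs cannot correlate with the $X_i$. A slice whose competitor identity is unknown is excluded deterministically, independent of its own coin flip, so it simply falls outside the index set.

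I would add a remark that saturation at about 4.29\,s and lost perf records lie outside the estimator's scope. Lost records make the effective keep probability smaller than $p$, which biases the estimate downward, and this belongs to the error model of \S\ref{sec:approx}.
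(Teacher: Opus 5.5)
Your proof is correct and is the paper's own argument: linearity of expectation with $\mathbb{E}[X_i]=p$ gives $\sum_i (d_i/p)\,p=\sum_i d_i$, and you are right that independence of the $X_i$ matters only for Proposition~\ref{prop:variance}. The conditioning step usefully spells out what the paper leaves implicit by treating $n$ and the $d_i$ as fixed, but your aside on identity-based discards is not needed and is somewhat overstated: under I3 a discard depends on transport arrival order, and batch filling ties that order to other slices' keep decisions.
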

\begin{proof}
$\mathbb{E}[X_i] = p$, so $\mathbb{E}[\hat{V}] = \sum_i (d_i/p)\,p = \sum_i d_i$.
\end{proof}

\begin{proposition}[Variance]
\label{prop:variance}
$\operatorname{Var}[\hat{V}] = \frac{1-p}{p}\sum_i d_i^2$, and if all slices have
equal duration the relative standard error is $\sqrt{(1-p)/(pn)}$.
\end{proposition}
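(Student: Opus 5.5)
The variance follows from the independence of the sampling decisions together with the Bernoulli variance formula. As in Proposition~\ref{prop:unbiased}, I condition on the realized schedule: the durations $d_1,\dots,d_n$ and the set of contributing slices are fixed, and only the indicators $X_i$ are random. This conditioning is justified because the sampling decision gates no state update (\S\ref{sec:bpf:sampling}). Whether a slice is kept therefore cannot change which later slices exist, their durations, or their ending snapshots. The $X_i$ are thus independent of the $d_i$ and of each other.

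The first step is to write $\hat{V}$ as a sum of independent terms $(d_i/p)X_i$. The variance of the sum is then the sum of the variances:
\[
\operatorname{Var}[\hat{V}] = \sum_{i=1}^{n} \frac{d_i^2}{p^2}\operatorname{Var}[X_i].
\]
Since $\operatorname{Var}[X_i] = p(1-p)$ for a Bernoulli$(p)$ variable, each term equals $d_i^2(1-p)/p$. Summing over $i$ gives $\frac{1-p}{p}\sum_i d_i^2$, which is the first claim.

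For the second claim, set $d_i = d$ for every $i$. The variance becomes $\frac{1-p}{p}\,n d^2$, and the true value is $nd$. The relative standard error is the standard deviation divided by the true value:
\[
\frac{d\sqrt{n(1-p)/p}}{nd} = \sqrt{\frac{1-p}{pn}}.
\]

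The computation itself is routine. The one step that needs care is the independence assumption. It requires that the kernel draw each keep decision independently with a fixed $p$, and that the conditioning on the schedule be legitimate. If sampling could influence waiting state, the set of contributing slices would depend on the $X_i$ and the clean Bernoulli-sum structure would break. This is exactly the design discipline of \S\ref{sec:bpf:sampling}, so I would state it explicitly as the hypothesis under which the computation holds. I would also note that the result assumes the slices are not lost in transport, or that transport loss is handled separately.
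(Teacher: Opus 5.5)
Your proof is correct and is essentially the paper's: independence of the $X_i$, $\operatorname{Var}[X_i]=p(1-p)$, additivity of variance over the terms $(d_i/p)X_i$, then the specialization $d_i=d$ giving $\sqrt{(1-p)/(pn)}$. The paper simply takes the $d_i$ as fixed and the $X_i$ as independent by definition of the sampling design, so your conditioning argument via \S\ref{sec:bpf:sampling} states that modeling assumption explicitly rather than adding a step. Strictly, that assumption idealizes away the small timing perturbation that a kept slice's extra copy and transport work could introduce.
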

\begin{proof}
The $X_i$ are independent with $\operatorname{Var}[X_i] = p(1-p)$, so
$\operatorname{Var}[\hat{V}] = \sum_i (d_i/p)^2 p(1-p) = \frac{1-p}{p}\sum_i d_i^2$.
With $d_i = d$, $\sqrt{\operatorname{Var}}/\!\sum_i d_i =
\sqrt{\frac{1-p}{p} n d^2}\,/\,(nd) = \sqrt{(1-p)/(pn)}$.
\end{proof}

Proposition~\ref{prop:variance} says accuracy is governed by the number of
contributing slices, not by the sampling rate alone. Contended intervals are
exactly the intervals with many slices, so precision is highest where it matters:
a target waiting across several CPUs on a machine switching a few thousand times
per second per CPU accumulates thousands of contributing slices in one second, and
at $p = 0.1$ with $n = 1000$ the relative standard error is about 9.5\%. The
opposite regime needs care. A target that waits only briefly contributes few
slices, and a one-second interval may contain none at all, so aggressive sampling
degrades the tail of small contention events first. \S\ref{sec:approx} records
this.

Two caveats bound Proposition~\ref{prop:unbiased}. It covers \emph{configured}
sampling only: slices lost to a full perf ring or a failed batch submission are not
compensated, and such loss correlates with load, so sustained loss biases the
estimate downward. \sysname exports the counters needed to detect this
(\S\ref{sec:eval}) rather than absorbing it silently. And the $d_i$ are themselves
subject to the ending-boundary approximation (\S\ref{sec:approx}), so
unbiasedness is with respect to the sampling design, not to the underlying
scheduler.

\subsection{Cost model}

\paragraph{Per context switch.}
The hook does a constant amount of work: read the epoch, compare it against the
CPU's bitmap epoch, resolve two task groups through the cache (an array load each,
with a map lookup only on the first observation of a CSS ID in the current epoch),
update at most two bits, and with probability $p$ copy $1+\textsc{extra}$ words
into the per-CPU batch. There is no loop over targets, no per-task lookup, and no
allocation. Perf submission happens once per 128 retained slices. The work is
independent of the number of targets $M$, of the number of containers, and of the
number of runnable tasks.

\paragraph{Per slice in userspace.}
Decoding is fixed arithmetic. Attribution iterates the set bits of the snapshot,
costing $O(|\{t : w_t = 1\}|)$, the number of targets actually waiting on that
CPU, rather than $O(M)$.

\paragraph{Memory.}
Kernel-side state is fixed by compile-time bounds, not by the workload: about
32\,KiB for the classification cache, 32\,KiB for the identity table, 8\,KiB of
per-CPU waiting bitmaps, a few kilobytes of per-CPU scalars, and a 2056-byte batch
slot per CPU. Userspace is dominated by the charge matrix at
$4096 \times M \times 8$ bytes, or 2.6\,MiB at $M = 84$, plus 600 ratio samples per
target. Nothing grows with task count, container count, or uptime.

\paragraph{Transport volume.}
With $C$ CPUs switching $f$ times per second and keep probability $p$, retained
slices arrive at $pfC$ per second, at $8(1+\textsc{extra})$ bytes each and $1/128$
of that in perf records. At $C = 64$, $f = 2000$, $p = 1$, and $\textsc{extra} =
1$, that is 128\,K slices per second: about 2\,MB/s in 1000 perf records, before
sampling. This is the quantity $p$ exists to control, and it is why sampling had
to be made safe rather than merely available.

\section{What \sysname Gets Wrong}
\label{sec:approx}

A tool that assigns blame will sometimes assign it wrongly, and an operator acting
on its output deserves to know how. This section lists every approximation the
design accepts, with the direction of the resulting error and the conditions under
which it appears. We treat the taxonomy as part of the contribution: it is what
makes the reported ratio interpretable, and several entries bound the settings in
which \sysname should be believed.

\subsection{Accounting}

\textbf{A1. Demand is not task-weighted.} Internal contention, external
contention, and throttling each contribute one demand unit per affected CPU,
regardless of how many of the target's tasks are queued there
(\S\ref{sec:model}). \sysname measures how much of the target's opportunity to run
was taken, not aggregate lost task time. For a target with a stable per-CPU
runnable count the two differ by a factor that cancels in the ratio. When
parallelism changes sharply within an interval they diverge, and the ratio
under-represents contention suffered by a deeply queued group relative to a
task-weighted counter.

\textbf{A2. Ending-boundary snapshots.} A slice is attributed using the waiting
state at its end, applied to its full duration. A target that wakes mid-slice is
charged for the whole slice; one that stops waiting mid-slice is charged for none
of it. The error is bounded by one slice duration per transition and is unbiased
in direction over many transitions. It grows with slice length, so it is largest
for CPU-bound competitors running long uninterrupted slices, which is also the
case where the ending state most likely held for the whole slice.

\textbf{A3. Unclamped anomaly threshold.} The threshold $K \cdot P_{99}$ is not
clamped to the ratio's maximum of $1$. If the baseline is high enough that the
threshold reaches $1$, no sample can exceed it and reporting is suppressed until
the baseline or $K$ falls. This silences alerts for persistently and severely
contended targets, arguably the ones an operator most wants to hear about, and it
is the entry we would fix first.

\subsection{Scheduler state}

\textbf{S1. Migration repairs only the source.} The destination bitmap is not
written remotely (Invariant~\ref{inv:migrate}). Between a migration and the
destination's next relevant event, a normal destination under-reports waiting,
while a stale bit at an already-throttled destination can over-attribute external
contention. The window closes at the destination's next switch, wakeup, or
unthrottle.

\textbf{S2. Cgroup task moves are invisible.} No hook observes a task moving
between CPU cgroups, so the source or destination group's bit can go stale. A
stale source bit persists until some later event references that task group, which
may never happen if the group goes idle.

\textbf{S3. Epoch-wide bitmap reset.} Publishing a new epoch clears every CPU's
bitmap (\S\ref{sec:bpf:epoch}). Targets unaffected by the reconfiguration
under-report until later observations rebuild their bits, bounded by the time to
the next scheduling event on each CPU.

\textbf{S4. Exact task-group scope.} A target is exactly the CPU cgroup named by
its \code{task\_group}; descendant cgroups are distinct task groups and are not
aggregated into it. Selection must name the cgroup whose tasks are to be measured.

\subsection{Identity}

\textbf{I1. Bounded target set.} At most $M$ containers are measured at once.
\sysname measures a sample of the machine's containers, not all of them.

\textbf{I2. Sparse CSS-ID limit.} A cgroup whose CSS ID exceeds 4095 does not fit
the 12-bit wire field. It is excluded from both sides of attribution, so it can
neither be blamed nor be measured, and it is counted.

\textbf{I3. Identity ordering and reuse.} A slice arriving before its competitor's
identity record is discarded. After a CSS ID is recycled, an already queued slice
can resolve to the new cgroup.

\textbf{I4. Immediate dense-slot reuse.} Slices or throttle records in flight when
a slot is reassigned may be attributed to the slot's new owner. This is the price
of reconfiguring without draining transport.

\textbf{I5. Lifecycle lag.} A delayed or lost lifecycle notification leaves target
classification stale until the next notification or the 60-second reconciliation.

\subsection{Sampling and transport}

\textbf{T1. Sampling variance.} Per Proposition~\ref{prop:variance}; short
contention events may fall entirely between samples.

\textbf{T2. Uncompensated transport loss.} Lost perf records and failed batch
submissions are not scaled away, and because loss correlates with load, sustained
loss biases estimates downward. Counters expose it.

\textbf{T3. Delayed partial batches.} A partial batch's age is checked only at the
next switch on that CPU, so if a CPU goes quiet its slices stay pending until a
later switch or shutdown. Slices carry duration but no timestamp, so userspace
attributes a delayed batch to the interval in which it arrives, and work can shift
between adjacent one-second intervals.

\textbf{T4. Duration saturation.} Slice and throttle durations above $2^{32}-1$\,ns
are saturated and counted, never wrapped.

\subsection{Throttling}

\textbf{H1. Direct throttling only.} \sysname observes throttling applied directly
to a target's own \code{cfs\_rq}. It does not propagate an ancestor's throttle
state into descendants, so time a target loses to an \emph{ancestor's} quota may
be attributed as external contention instead of $T$. This is the most consequential
misattribution in the list. On deployments that impose quota at a parent cgroup it
can blame neighbors for a limit the target's own hierarchy imposed, and it should
be read as a scoping condition on where \sysname's external ratio is trustworthy.

\textbf{H2. Interval attribution of throttle time.} A throttle interval is charged
entirely to the evaluation interval containing the unthrottle, rather than split
across the intervals it spans.

\textbf{H3. Attach during throttling.} If tracing starts mid-throttle, the first
throttle event includes the portion preceding attachment.

\textbf{H4. Throttle-event loss.} A lost throttle record undercounts $T$ in its
interval, which slightly inflates $r = E/D$, but does not corrupt waiting state or
later throttle accounting.

\section{Implementation}
\label{sec:impl}

\sysname is about 1{,}000 lines of C for the BPF object and 3{,}100 lines of Go for
the userspace tracer, excluding the container-discovery interface it shares with
the monitoring agent that hosts it. It runs on unmodified 4.18 (el8) and 5.10
kernels.

Supporting 4.18 shapes several choices. The BPF ring buffer is unavailable before
5.8, so transport uses a per-CPU perf event array. The verifier requires
compile-time-constant sizes for \code{bpf\_perf\_event\_output}, which is why
partial batches are emitted at power-of-two capacities (\S\ref{sec:bpf}).
Unthrottle handling uses an entry/return kprobe pair rather than a tracepoint,
since no tracepoint exposes a successful unthrottle together with its affected
runqueue.

The bitmap width $\textsc{extra}$, and therefore the target capacity
$M = 20 + 64\,\textsc{extra}$, is fixed at compile time for both artifacts because
it determines struct layouts on both sides of the wire. Userspace validates the BPF
map value size before attaching and validates the width carried in every batch
header, so a mismatched pair fails immediately instead of misattributing.
Everything else is runtime configuration: keep probability, batch size, perf ring
size and watermark, poll interval, the anomaly multiplier $K$, and the target
selection filters.

About a dozen counters are exported for self-diagnosis: perf submission failures
and the slices lost with them, records lost to a full ring, invalid runnable
counts, duration saturations, CSS-ID overflows, and upload drops, together with
peak occupancy of the perf ring and of the userspace queue. Transport loss is the
one error the estimator does not compensate (\S\ref{sec:analysis}), so making it
visible is part of the design rather than an afterthought.

\section{Preliminary Evaluation}
\label{sec:eval}

\sysname is deployed in production, and the measurements below are the ones we can
report with confidence today. They characterize cost, not accuracy or detection
quality; \S\ref{sec:eval:plan} states the design of the full evaluation. We label
this section preliminary rather than present a partial evaluation as a complete
one.

\subsection{Cost}

All figures come from a 96-core Intel Xeon production host tracking 84 real
production containers, with sampling disabled ($p = 1$, the most expensive
configuration).

\paragraph{Scheduler hooks.}
Under a Redis workload, attaching \sysname's hooks costs about 1\% of throughput,
measured by comparing runs with the hooks attached against runs with them
detached. We observed no significant change in p99 latency, which is the more
sensitive property for an interactive service: the hooks add bounded work inside an
already-held runqueue lock and introduce no new blocking.

\paragraph{Userspace tracer.}
The tracer consumes approximately 6\% of one core in the same configuration, as
reported by \code{top}, or about $0.06\%$ of the 96-core machine, including every
goroutine in the daemon.
This is an upper bound rather than a typical figure for two reasons. Sampling
scales the entire userspace path linearly in $p$, and attribution cost is
proportional to the number of \emph{waiting} targets per slice rather than to the
target capacity (\S\ref{sec:analysis}), so a machine with less contention pays
less.

\paragraph{Transport headroom.}
Over the same period, peak occupancy of the per-CPU perf rings was about 1\% of
capacity and peak occupancy of the shared userspace record queue about 15\%. These
are backpressure indicators, not costs. They say the kernel side ran with roughly
two orders of magnitude of headroom and the userspace side with a factor of six,
which matters because transport loss is the one error source the estimator does
not compensate (\S\ref{sec:analysis}, T2). The queue is the tighter of the two and
is the metric to watch as target counts or machine sizes grow.

\paragraph{Reading these numbers.}
The 84 targets are the compiled capacity at the default bitmap width, so this
configuration also exercises the widest snapshot and the largest charge matrix the
default build supports. Together the figures say that culprit attribution costs
roughly 1\% of application throughput and a fifteenth of a core per machine at full
sampling, which is what ``continuously on'' has to mean for a fleet.

\subsection{Operational experience}

\sysname's anomaly reports are used operationally, and have been most valuable for
bursty contention: the pattern on-demand tracing tends to miss, because by the time
an operator attaches a tracer the episode is over. We offer this as deployment
experience rather than as a measured result. Detection quality has not been
evaluated against a labelled set, and E4 below states how we intend to measure it.

\subsection{Planned evaluation}
\label{sec:eval:plan}

The following experiments need no kernel modification, so others can reproduce
them.

\paragraph{E1: Attribution accuracy against a constructed ground truth.}
Pin one victim and $n$ competing CPU hogs to a single CPU, making the culprit set
unambiguous by construction and each hog's true CPU consumption independently
readable from \code{cpuacct}. Vary the hogs' shares and duty cycles and compare
\sysname's charge-matrix row against those independent measurements, reporting
error both in the total and in the blame \emph{split} across competitors. Then
extend to multiple CPUs and multiple simultaneous victims, the regime where
correlational attribution is documented to become inconsistent~\cite{panda}, and
report whether \sysname's error is unchanged there, as the design predicts.

\paragraph{E2: Estimator behaviour versus sampling rate.}
Sweep $p$ from $1.0$ to $0.01$ under fixed contention and compare the observed
spread of $E_t$ against Proposition~\ref{prop:variance}, alongside tracer CPU cost.
The result is the accuracy/overhead curve an operator needs to choose $p$.

\paragraph{E3: Hook cost isolation.}
Measure the added cost per context switch on a switch-saturating microbenchmark,
separating always-on state maintenance from the sampling-dependent snapshot and
transport path, and report cost as a function of $p$ and of the number of waiting
targets.

\paragraph{E4: Detection quality.}
Inject contention of known magnitude, duration, and burstiness against a service
with a measurable latency SLO, and report detection latency, precision, and recall
of the $K \cdot P_{99}$ rule as a function of $K$, including the suppression regime
predicted by approximation A3.

\paragraph{E5: Scaling and reconfiguration.}
Measure cost as a function of target count $M$ and machine size, and the
measurement gap an epoch publication introduces (approximation S3) under container
churn.

\paragraph{E6: Comparison.}
Run \sysname alongside a preemption-tagging tracer of the Netflix
form~\cite{netflix} in a scenario where the victim is starved without being
preempted, that is, competitors run while the victim is never picked, to quantify
the blame that event tagging misses by construction. Compare separately against the PSL/PSP
decision matrix of Volpert et al.~\cite{volpert} across their eight contention
scenarios, checking that \sysname agrees on the quadrant while additionally naming
the source, and that its directly measured $I_t$ and $T_t$ track the self-disruption
cases their idle-task proxy identifies.

\section{Limitations and Future Work}
\label{sec:limitations}

\paragraph{Scope.}
\sysname measures CFS. Tasks in the real-time and deadline classes consume CPU
that a target waits for but are not represented, so on hosts running significant
real-time work the ratio under-attributes. Contention through shared caches,
memory bandwidth, and SMT siblings is outside the model as well: \sysname
attributes \emph{runqueue} contention, the component that is architecturally
attributable to an identifiable cgroup. Hardware-counter techniques address the
microarchitectural component~\cite{cpi2,bubbleup} and are complementary.

\paragraph{Cgroup v2.}
The design depends on \code{task\_group}, \code{h\_nr\_running}, and the CSS
identity, none of which is specific to the hierarchy version, so we expect v2 to
work. Only v1 has been tested. Validating v2, along with the unified hierarchy's
different container-to-cgroup layout, is the immediate next step.

\paragraph{Newer kernels.}
Deployment targets 4.18 and 5.10 today, and 7.x is planned. The ordering
invariants of \S\ref{sec:bpf:order} must be re-verified for each kernel, which is a
real maintenance obligation: the design substitutes documented ordering guarantees
for kernel patches, trading a porting cost for a verification cost. Newer kernels
also relax the constraints that shaped transport, since BPF ring buffers replace
the perf event array and the verifier accepts variable output sizes, so the
batching machinery can be simplified where those facilities exist.
\code{sched\_ext}~\cite{schedext} raises a further question, because a BPF
scheduler can change the semantics of the counters \sysname reads.

\paragraph{Hierarchy.}
A target is exactly one CPU cgroup (approximation S4), so a container whose tasks
live in descendant cgroups requires selecting those descendants. Aggregating a
subtree into one dense slot would generalize this, and the bitmap representation
admits it, but the task-group-to-dense-ID mapping would no longer be one-to-one.

\paragraph{Ancestor throttling.}
Approximation H1 is the correctness gap we would close first: time lost to an
ancestor's bandwidth limit is currently charged as external contention. Detecting
it requires observing the target's hierarchical throttle state rather than only its
own \code{cfs\_rq->throttled}, a bounded extension to the hook set.

\paragraph{Task weighting.}
The indicator formulation (approximation A1) is what makes a target's per-CPU state
one bit. Recovering task-weighted demand would require carrying counts instead of
bits, and the interesting question is whether a small saturating counter per
target, two or three bits, recovers most of the fidelity at acceptable width.

\paragraph{From attribution to action.}
\sysname produces a per-culprit signal but does not act on one. The signal is
directly usable by interference-aware schedulers and controllers that today must
infer culprits statistically~\cite{heracles,parties,quasar}. Closing that loop, by
feeding a blame matrix into placement or throttling decisions, is the natural next
system.

\section{Related Work}
\label{sec:related}

\paragraph{Victim-side kernel signals.}
PSI reports the fraction of wall time a cgroup's tasks were stalled on
CPU~\cite{psi}. Scheduler statistics accumulate per-entity run-queue
wait~\cite{schedstats}. The CPU controller reports quota-induced
throttling~\cite{cgroupv2,schedbwc}. These are the signals fleets actually deploy,
and they are the baseline \sysname extends: each quantifies the victim's suffering
accurately, and none names a cause. eBPF tools in the same family, \code{runqlat}
and \code{runqslower}~\cite{bcc,runqlat}, refine the distribution of that wait
without changing what it describes.

\paragraph{Scheduler-event instrumentation.}
The closest prior work instruments the scheduler itself. Volpert et al.\ hook
\code{sched\_wakeup} to timestamp a task becoming runnable and \code{sched\_switch}
to close out that wait when the task is scheduled, and aggregate the result into
two per-second metrics for an observed cgroup: average process scheduling latency
(PSL) and the average count of switches whose incoming task is the idle task
(PSP), the latter read as an indicator of quota throttling. Their decision matrix
then distinguishes an undisturbed workload, a noisy neighbor (high PSL, low PSP),
and disruption from inside the isolation group (high PSL, high PSP), online and
without workload profiles or offline analysis~\cite{volpert}.

That separation of external contention from self-inflicted disruption is the same
distinction \sysname draws, and achieving it without per-workload models is the
contribution. \sysname differs in two respects. First, it measures the quantities
the matrix infers: internal contention $I_t$ comes from a target running while its
own sibling waits, and throttling $T_t$ from the throttle and unthrottle
transitions themselves, rather than from idle-task switches used as a proxy. So
$I_t$, $E_t$, and $T_t$ are three separately measured terms rather than one
qualitative quadrant. Second, and more fundamentally, their instrumentation reads
only tasks of the observed cgroup. At each switch it either counts an idle-task
preemption or closes out an observed task's wait; the competing cgroup is never
examined. The method therefore establishes that a victim is being disturbed from
outside without identifying what is disturbing it, which is precisely the question
\sysname answers. Their evaluation covers eight contention scenarios on Kubernetes
with \code{stress-ng} and \code{nsjail}, and reports no overhead figures, noting
in its threats to validity that eBPF instrumentation may itself perturb the
measurement.

Netflix's tracer does attach an identity: it hooks \code{sched\_wakeup} and
\code{sched\_switch}, exports per-container run-queue latency, and tags each
preemption with the cgroup of the outgoing task, classified as same-container,
other-container, or system service~\cite{netflix}. \sysname differs in what is
attributed and in how much is attributable. Their unit is the \emph{preemption
event}; \sysname's is \emph{CPU time}, in a competitor $\times$ victim matrix
normalized by the victim's own demand, which yields a comparable ratio rather than
a count. More fundamentally, a preemption tag names the single cgroup that
displaced the task at one instant, whereas \sysname charges every cgroup that ran
during the whole interval in which the victim was waiting, including cgroups that
never preempted it and starvation episodes with no preemption at all. The
mechanisms differ correspondingly: a per-PID hash map of enqueue timestamps and
per-event map operations, against a fixed per-CPU bitmap with no per-task state;
and rate-limited event dropping, which is load-correlated and uncompensated,
against uniform sampling with inverse-probability rescaling (\S\ref{sec:analysis}).

\paragraph{Statistical and causal attribution.}
CPI\textsuperscript{2} identifies CPI outliers and nominates likely perpetrators
for throttling at Google scale~\cite{cpi2}. PANDA improves antagonist ranking with
global historical knowledge and a machine-level CPI metric, and documents the
family's central weakness: when multiple victims coexist and share contention,
correlation-based attribution becomes inconsistent~\cite{panda}. Recent work
applies Granger causality to resource time series to move from detection toward
causal attribution~\cite{causal}, and unsupervised methods detect noisy-neighbor
episodes without naming their source~\cite{unsupervised}. These methods run above
the kernel, scale across a cluster, and need no scheduler instrumentation.
\sysname sits at the opposite point. It observes the responsible party at the
scheduling event instead of inferring it from aggregate metrics, so multi-victim
coexistence, the documented failure mode, does not affect it; the cost is running
on every machine and measuring a bounded set of targets. The two compose well: a
cluster-level detector can select where to look, and \sysname can say who.

\paragraph{Interference-aware placement and isolation.}
Bubble-Up predicts collocation sensitivity by profiling~\cite{bubbleup}. Paragon
and Quasar classify and place workloads by interference
profile~\cite{paragon,quasar}. Heracles and PARTIES enforce QoS by dynamically
partitioning resources~\cite{heracles,parties}. Alioth monitors interference for
public-cloud multi-tenancy with learned models~\cite{alioth}. This line of work
consumes an interference signal and acts on it, generally deriving that signal from
profiling or from correlated metrics. \sysname does not schedule; it supplies the
per-culprit signal such systems currently lack.

\paragraph{Resource attribution more broadly.}
Retro attributes and controls resource usage across distributed request
flows~\cite{retro}, and continuous profilers attribute CPU \emph{consumption} to
code. Both answer ``who used the resource''. \sysname answers the adjacent and
harder question, ``whose use of the resource denied it to whom''. Lozi et
al.~\cite{wastedcores} make the complementary point that scheduler pathologies can
stay invisible to standard metrics for years, which argues for mechanism-level
instrumentation of the kind \sysname performs.

\paragraph{Sampling in tracers.}
Inverse-probability estimation is standard in survey sampling~\cite{ht1952} and
appears in profilers, but sampled tracers commonly sample the same event stream
they use to maintain state. That couples the sampling rate to correctness and
pushes implementations toward biased, load-dependent dropping. Separating state
maintenance from measurement (\S\ref{sec:bpf:sampling}) is what lets \sysname
sample uniformly and therefore estimate without bias.

\section{Conclusion}
\label{sec:conclusion}

Contention is a relationship, but every deployed CPU signal is a scalar attached to
a victim. \sysname closes that gap by inverting the accounting: instead of
measuring how long a container waited, it measures the CPU time every other
container consumed while that container was runnable and not running. Blame
becomes an observation at the scheduling event rather than an inference over
aggregate metrics.

Making the inversion deployable took three decisions that we believe generalize
beyond this tool. Splitting identity into a sparse space for parties that may be
blamed and a dense space for parties being measured turns a cross product into a
bitmap. Stamping each completed run slice with that bitmap makes records
self-describing, which makes userspace stateless and makes data loss cost
measurements rather than correctness. And separating state maintenance from
measurement makes sampling a pure variance-for-cost trade with no correctness
cliff, which is what lets the tool stay on continuously. Staying on is what catches
the bursty contention that on-demand tracing structurally cannot.

\sysname runs on unmodified 4.18 and 5.10 kernels and is deployed in production,
where culprit attribution costs about 1\% of Redis throughput and 6\% of one core
on a 96-core host tracking 84 containers. We have stated its error model in full,
because a tool that assigns blame owes its users the conditions under which it
assigns blame wrongly. The implementation will be released as open source.

\bibliographystyle{ACM-Reference-Format}
\bibliography{refs}

\end{document}